\documentclass[11pt]{article}
\usepackage{amsmath,amsfonts,amsthm,amssymb,geometry,bbm}
\usepackage{graphics,color}
\usepackage{epsfig}
\usepackage{caption}
\usepackage{algorithmic}
\usepackage[ruled]{algorithm2e}
\usepackage{blkarray}
\usepackage[T1]{fontenc}
\usepackage{enumitem}
\usepackage[numbers]{natbib}

\geometry{left=1.0in, right=1.0in, top=1.0in, bottom=1in}

\newtheorem{definition}{Definition}

\newtheorem{lemma}{Lemma}

\newtheorem{theorem}{Theorem}

\usepackage{tabulary}
\usepackage{booktabs}

\newcommand{\M}{\mathcal{M}}
\newcommand{\MT}{\mathcal{MT}}
\newcommand{\T}{\mathcal{T}}
\newcommand{\AP}{\mathcal{AP}}
\newcommand{\br}{\mathbf{r}}
\newcommand{\gen}{\texttt{Gen}\xspace}
\newcommand{\tru}[0]{\mathbf{T}}

\newcommand{\history}{\texttt{H}\xspace}
\newcommand{\indicator}{\mathbbm{1}}

\title{Outsourcing Computation: the Minimal Refereed Mechanism\thanks{The first and third authors gratefully acknowledge the support of the National Science Foundation under CCF \#1618187.  The last two authors gratefully acknowledge the support of NSF under Career Award \#1452915. }} \author{Yuqing Kong\\ Peking University \\ yuqing.kong@pku.edu.cn \and Chris Peikert \\ University of Michigan \\ cpeikert@umich.edu  \and Grant Schoenebeck\\ University of Michigan \\ schoeneb@umich.edu \and Biaoshuai Tao \\ University of Michigan \\ bstao@umich.edu }

\begin{document}
\date{}
\maketitle

\begin{abstract}
We consider a setting where a verifier with limited computation power delegates a resource intensive computation task---which requires a $T\times S$ computation tableau---to two provers where the provers are rational in that each prover maximizes their own payoff---taking into account losses incurred by the cost of computation.
We design a mechanism called the Minimal Refereed Mechanism (MRM) such that if the verifier has $O(\log S + \log T)$ time and $O(\log S + \log T)$ space computation power, then both provers will provide a honest result without the verifier putting any effort to verify the results.  The amount of computation required for the provers (and thus the cost) is a multiplicative $\log S$-factor more than the computation itself, making this schema efficient especially for low-space computations.
\end{abstract}

\section{Introduction}
The growing number of computationally intensive tasks has led to the delegation of computation to ``computing as a service'' platforms such as Amazon's EC2, Microsoft's Azure, etc.  This enables users with widely varying loads to only pay for the computation they need.  This mirrors a larger trend to out-source: Uber (car as a service), Amazon Turk (computer plugged in worker), etc.
When outsourcing tasks, some labors may perform the task honestly due to their intrinsic preference for honesty; however,  often labors need incentives which encourage them to dutifully perform the task. If the requester has ability to (cheaply) verify the  completion of tasks, the incentive problem can be solved naturally by only providing payment for satisfactory results.  However, in the case of outsourcing computation, the verifier cannot necessarily verify the task's completion.  What should an incentive system look like for outsourcing computation?

Motivated by the need for verifiable computations, recent results have drawn on the work of interactive proof systems (IP)---where a resource-limited verifier can verify an extremely complicated proof provided by a untrusted prover---as main ingredients.  However,  the classical IP work diverges from the outsourced computation application in two ways: (1) in the crowdsourcing setting, the cost an honest prover incurs in performing the proof must be taken into consideration, while in the classical IP, the honest prover may suffer a heavy (and uncompensated) burden in proving her result; (2) in the crowdsourcing setting provers can be assumed rational rather than merely untrusted, while the classical IP setting work does not assume or  make use of the rationality of the prover.

Several works (e.g. \cite{azar2012rational,reingold2016constant,canetti2011practical}) either take point (1) or (2) into consideration. But few works consider the both divergences. In this paper, we consider the relation between the related IP work and outsourced computation applications.   We take the effort of the provers into consideration, and provide a mechanism---that we call \emph{the Minimal Refereed Mechanism}--- which harnesses the rationality of provers in that it is individual rational, and has the truthful computation as the only  equilibrium.  In particular, this means that our protocol is robust against agents communicating, as long as they cannot make binding  commitments to one another (for example, to redistributing the payoffs in the future).   Moreover, an honest prover always obtains a positive utility even if her opposite is irrational.  While our mechanism requires that the verifier \emph{can} perform a computation requiring $O(\log S + \log T)$ time and space, in equilibrium, the verifier need only check the equality of answers.

Each prover that faithfully follows our mechanism must spend a factor of $\log(S)$ more computational effort than is required to simply run the computation.  This, of course, must be compensated by the verifier.  However, in the case where the verifier has many different processes to run, we can reduce this overhead by a factor of nearly two.  Instead of having two provers run every program, the vast majority will only be run by one prover.

A key ingredient in the construction of the minimal refereed mechanism is from the ``prisoner dilemma''.  When provers are paid based on whether they have the same output, they may collude to obtain agreement without exerting any effort. To solve the ``collusion'' problem, our minimal refereed mechanism pays an agent who betrays the collusion and \emph{tells the truth} a large reward.  We also draw on techniques form IP so that the verification is possible with dramatically fewer resources than the computation itself requires.

\subsection{Related Work}
\paragraph{Outsourced computation literature}
The most closely related works in this area to the current paper are \citet{belenkiy2008incentivizing, dong2017betrayal}. We all implement the idea of the ``prisoner dilemma'' in outsourced computation. \citet{dong2017betrayal} also employ smart contract to implement the ``prisoner dilemma'' based outsourced computation. However, \citet{belenkiy2008incentivizing,dong2017betrayal} require that the verifier has the ability to run the program by himself and infrequently performs the whole computation to verify the correctness of the prover's output.  Our work only requires that the verifier has the ability to perform a simple $O(\log T + \log S)$ time arbitration process when the provers disagree where the computation size is $T\times S$.

\citet{canetti2011practical} also designs a $O(\log T + \log S)$ time process where a verifier can determine which prover is honest with the help of Merkle hash tree. However, they do not make use of the rationality of the provers but instead assume one of the provers must be honest.

Conceptually, our paper combines the results of the two aforementioned works.  However, naively combining them does not work.  Game theory and computation are notoriously tricky to combine~\cite{halpern2015algorithmic,halpern2016computational}.
For example, our results do not yield a dominant strategy equilibrium as those of \citet{belenkiy2008incentivizing}, and using a collision resistant hash function as in \citet{canetti2011practical} seems not to be enough for our setting.  We carefully integrate the two ideas, and, moreover, provide a delicate game theoretical analysis to show that rational provers must be honest even if the arbitration process gives an arbitrary answer when both of them are dishonest.

\paragraph{Interactive Proof (IP) literature} Since the seminal work of \citet{goldwasser1989knowledge} and \citet{babai1988arthur} introducing interactive proofs (IP), a host of results in closely related models have followed (see, e.g.,~\cite{bitansky2012extractable,goldwasser2008delegating,goldwasser2015delegating,Badrinarayanan:2018:SDL:3188745.3188924}).  In the classical model (e.g.~\citet{lund1992algebraic}, \citet{shamir1992ip}), a verifier with limited computation power has the ability to verify statements provided by a untrustworthy prover with unlimited computation power. This desirable property makes IP work as an important ingredient in many outsourced computation applications. However, in the classical IP work, the verifier usually employs an arithmetization method that imposes a heavy computational burden on the prover even when the prover is honest. Moreover, the classical IP work always considers the worst case---the prover is an adversary.

\citet{azar2012rational} assume that the prover is rational.  With this assumption, \citet{azar2012rational} show that the verifier can easily incentivize a rational prover to provide the answer of $\sharp SAT$ in the following manner: the verifier asks the prover to report $\frac{\sharp SAT}{2^n}$ which can be seen as the prover's prediction for the event that a randomly chosen assignment is satisfied.   The verifier uses a tool called proper scoring rules~\cite{brier1950verification,gneiting2007strictly} to measure the accuracy of the prover's prediction via only one sample and pays the prover the score of the accuracy.  For a $SAT$ instance with $n$ variables, uniformly randomly picking an assignment, the assignment is satisfied with probability $\frac{\sharp SAT}{2^n}$, and so a property of proper scoring rule implies that the prover should provide the exact value of $\frac{\sharp SAT}{2^n}$ to maximize her expected payment. This clever design works with the assumption that the prover can obtain the exact answer without any effort. However, in real life applications, the exponential precision required in some of the reports is very costly to provide.  This influential work has been extended to work for different complexity classes, to improve the efficiency of the verifier, to improve the efficiency of the prover, and to the setting with multiple computation tasks~\cite{azar2013super,campanelli2015sequentially,campanelli2017efficient,campanelli2018fine,guo2014rational,guo2016rational}.  However, while this line of work does explicitly have incentives, the costs of computation are ignored while computing these incentives~(however, see Sect.~3 in~\cite{campanelli2015sequentially}).

Several works successfully design an interactive proof system where the computational resources of both the verifier and honest prover run are limited, but they do not take the rationality of the prover into consideration.   The most closely related work in this area to the current paper are refereed games \cite{feige1997making} and doubly efficient IP \cite{reingold2016constant}. The arbitration process in the current paper is designed based on the idea of the refereed game in \citet{feige1997making}. However, \cite{feige1997making} do not make use of the rationality of provers and still put a heavy burden on the honest prover. \citet{reingold2016constant} designs a doubly efficient and constant-round interactive proofs for languages that have a unique witness---if $x\in L$ there exists a unique witness $y$, of polynomial size, that attests to this.  In their proof system the verifier runs in linear time with respect to $|y|$, and the honest provers run in polynomial time with respect to $|y|$.  However, if we do not have better than polynomial bound of $|y|$ in terms of $|x|$ this tells us little about the required run time of verifier.  Note that the prover's polynomial time bound does not account for the time it takes the prover to find $|y|$, which could be super-polynomial, if $L$ is a hard language (e.g. $L \not\in P$).
\citet{reingold2016constant} do not explicitly take the rationality into consideration.  Moreover, even if we use the prisoner dilemma technique to modify \citet{reingold2016constant} to a mechanism where the verifier does not need to spend effort when the provers are rational, that modified mechanism still requires the verifier has the ability to run a linear time verification (in the size of $|y|$), while our mechanism only requires a sublinear time verification in $|x|$ (as long as the computation itself is computable in subexponential time).

\citet{kalai2019howto} update this work to include various additional settings such as making the computation publicly verifiable (while keeping the result private), non-interactive, and employing standard cryptographic assumptions.
Moreover, their work applies to polynomial computation rather than NP computations.  The running time required by the verifier is still polynomially related ($T^\epsilon$) to the actual running time of the delegated computation $T$, and the verifier incurs an additional polynomial overhead.

The aforementioned \citet{canetti2011practical} uses interactive proofs with multiple provers to design schemes with increased efficiency.
\citet{canetti2013refereed} extends this line of results to be more efficient and apply to more realistic architectures (instead of Turing Machines), but both assume one honest prover.

\citet{gennaro2010non} gives a construction that allows the outsourcing of a single function for multiple inputs,  a different setting that considered here.  The verifier's need for computational power scales  linearly with the output size of the computation by cleverly employing techniques from Yao's garbled circuit and fully-homomorphic encryption.

\citet{teutsch2017scalable} produce a white paper for TrueBit, a system which allows out-sourced computation via smart contracts for the digital currency Ethereum.  The system allows users to post computations with a reward for the answer.  A user proposing to have solved the computation must also post a bounty.  The proposed answer then can be challenged by any user.  A challenge results in an arbitration process where the solver must prove the validity of her solution.  If she fails, she loses her bounty.  If no successful challenge occurs before a deadline, the solver collects the original reward and reclaims her bounty.   Unfortunately, in the equilibrium, agents should shirk the task and report randomly  with a small probability.

\section{Preliminaries}
Consider the scenario where a verifier wants to solve a question $q$ and has program $\mathcal{M}$ that can solve the question.
The verifier, however, only has \emph{limited computational power} and cannot run the code by himself.
Therefore, the verifier gives the program to two agents: Alice and Bob.
In this paper, we design a mechanism for the verifier which collects reports from Alice and Bob, and rewards them based on their reports in a way that incentivizes both agents to faithfully execute the program.
In this section, we first review cryptographical hash functions and the Merkle hash tree which are used by our mechanism, and then discuss the mechanism design goals.

\subsection{Merkle Hash Tree $H(\mathcal{T})$ of Computation Table $\mathcal{T}$}
In our setting, Alice and Bob use the same code $\mathcal{M}$ to solve $q$ if both of them are honest. We assume the program $\mathcal{M}$ requires at most time $T$ and space $S$.

\begin{definition}[computation table]
The \emph{computation table} $\mathcal{T}$ of a Turing machine $\mathcal{M}$ that calculates question $q$ is a $T\times S$ matrix. The first row encodes the input and initial configuration of $\mathcal{M}$. Each row has an \emph{active region} around where the read/write head of $\mathcal{M}$ is located. The last non-blank row has only one non-blank entry---the answer of question $q$.
\end{definition}

\begin{definition}[hash function~\cite{lindell2014introduction}]\label{def:hashFunction}
A \emph{hash function} (with output length $\ell$) is a pair of probabilistic polynomial-time algorithms $(\gen,H)$ satisfying the following:
\begin{itemize}
    \item \gen is a probabilistic algorithm which takes as input a security parameter $1^n$ and outputs a key $k$. We assume that $1^n$ is implicit in $k$.
    \item $H$ takes as input a key $k$ and a string $x\in\{0,1\}^\ast$ and outputs a string $H^k(x)\in\{0,1\}^{\ell(n)}$ (where $n$ is the value of the security parameter implicit in $s$).
\end{itemize}
We call $H^k(x)$ the \emph{hash value} of $x$.
\end{definition}

A standard property that a hash function $(\gen,H)$ has is the \emph{collision-resistance}, meaning that it is computationally infeasible to find a collision---$x,x'\in\{0,1\}^\ast$ such that $H^k(x)=H^k(x')$, even if the algorithm knows the key $k$.

\begin{definition}[collision-resistance]
Given a hash function $(\gen,H)$, an \emph{adversary} $\mathcal{A}$ is a probabilistic algorithm which takes as inputs the security parameter $n$ and a key $k$ generated by $\gen(1^n)$, and outputs $x,x'\in\{0,1\}^\ast$.
The hash function is \emph{collision-resistant} if $\Pr(H^k(x)=H^k(x'))\leq\frac{t^2}{2^n}$ for all probabilistic adversaries $\mathcal{A}$ that run in time at most $t$ (where the randomness in this probability comes from $\mathcal{A}$, not the key generation $k\sim\gen(1^n)$).
\end{definition}

For simplicity, we refer to $H$ or $H^k$ instead of $(\gen,H)$ as a hash function, and all the hash functions in this paper satisfy collision-resistance.
Throughout the paper, we make a standard assumption that the hash function $(\gen,H)$ can only be accessed as a \emph{random oracle}~\cite{bellare1993random}.
As a result, for a fixed message $x$, we assume that the time required to compute the hash value depends only on $n$.
Moreover, it is infeasible to obtain the value $H^k(x)$ without knowing $x$.

\begin{definition}[Merkle (hash) tree]
A Merkle tree is a binary tree in which every internal node stores the hash value of the \emph{concatenation} (denoted by symbol $||$) of its two children and the leaves are the hash values of different data blocks.
\end{definition}

We are interested in constructing a Merkle tree $\MT_H(\T)$ for a computation table $\T$.
Definition~\ref{def:MTT} and Fig.~\ref{fig:tree} illustrates the construction.

\begin{definition}[Merkle tree for a computation table]\label{def:MTT}
Given a computation table $\T$ of a Turing machine $\M$ and a Hash function $H$, the \emph{Merkle tree for $\T$}, denoted by $\MT_H(\T)$, is constructed as follows.
\begin{enumerate}
    \item \emph{The Lower Part of $\MT_H(\T)$}: For each row $\T_i$ of $\T$, we split it into several data blocks of size $\lambda$, and construct a Merkle tree $\MT_H(\T_i)$ where each leaf is the hash value of a data block.
    \item \emph{The Upper Part of $\MT_H(\T)$}: The upper part of $\MT_H(\T)$ is a binary tree with $T$ leaves such that the $i$-th leaf is the root of $\MT_H(\T_i)$. Each internal node has value which is the hash value of the concatenation of its two children as it is in a Merkle tree.
\end{enumerate}
\end{definition}

Throughout the paper, we use $r$ to denote the value of the root node of $\MT_H(\T)$, $r_i$ to denote the value of the root of the subtree $\MT_H(\T_i)$ which corresponding to the $i$-th row of $\T$, and $r_{ij}$ to denote the value of the leaf corresponding to the $j$-th block of the $i$-row.
Denote the $j$-th block of the $i$-row of $\T$ by $b_{ij}$, and then  $r_{ij}$ is the hash value of $b_{ij}$.
We use $r^A,r_i^A,r_{ij}^A,b_{ij}^A$ to refer to the corresponding values that Alice provides (which may be subjected to Alice's strategical manipulation), and let $r^B,r_i^B,r_{ij}^B,b_{ij}^B$ have similar meanings.
We sometimes abuse the notations a little bit and use $r,r_i,r_{ij}$ to refer to the nodes themselves instead of the values stored in these nodes.

An advantage of the Merkle tree is that we can verify the consistency between a single data block $b_{ij}^A$ (or $b_{ij}^B$) and the value $r^A$ (or $r^B$) with time complexity only $O(\log T+\log S)$, as we will see soon.

\begin{definition}[consistent path]
Given $\MT_H(\T)$ for a computation table $\T$ and two nodes $u,v$ of $\MT_H(\T)$ such that $u$ is an ancestor of $v$, we say \emph{the path from $u$ to $v$ is consistent} if for each node $w$ on the (shortest) $u$-$v$ path the value of $w$ is the hash value of the concatenation of the values stored in $w$'s children.
In particular, for $w=v$ being a leaf, the value must be the hash value of the corresponding data block.
\end{definition}

From the definition above, to check the consistency of a path from $u$ to $v$, we need the values of all the nodes on the path and all their children.
For example, to check the consistency between $b_{ij}^A$ and $r^A$, we need to check if the path from the root to the leaf corresponding to this data block is consistent.
Since this path has length $O(\log(ST))=O(\log S+\log T)$ and each node on the path has at most two children, this consistency can be checked in time $O(\log S+\log T)$.

\begin{figure}[!ht]
  \caption{Hash Computation Tableau via Merkle Tree}
  \centering
    \includegraphics[width=0.8\textwidth]{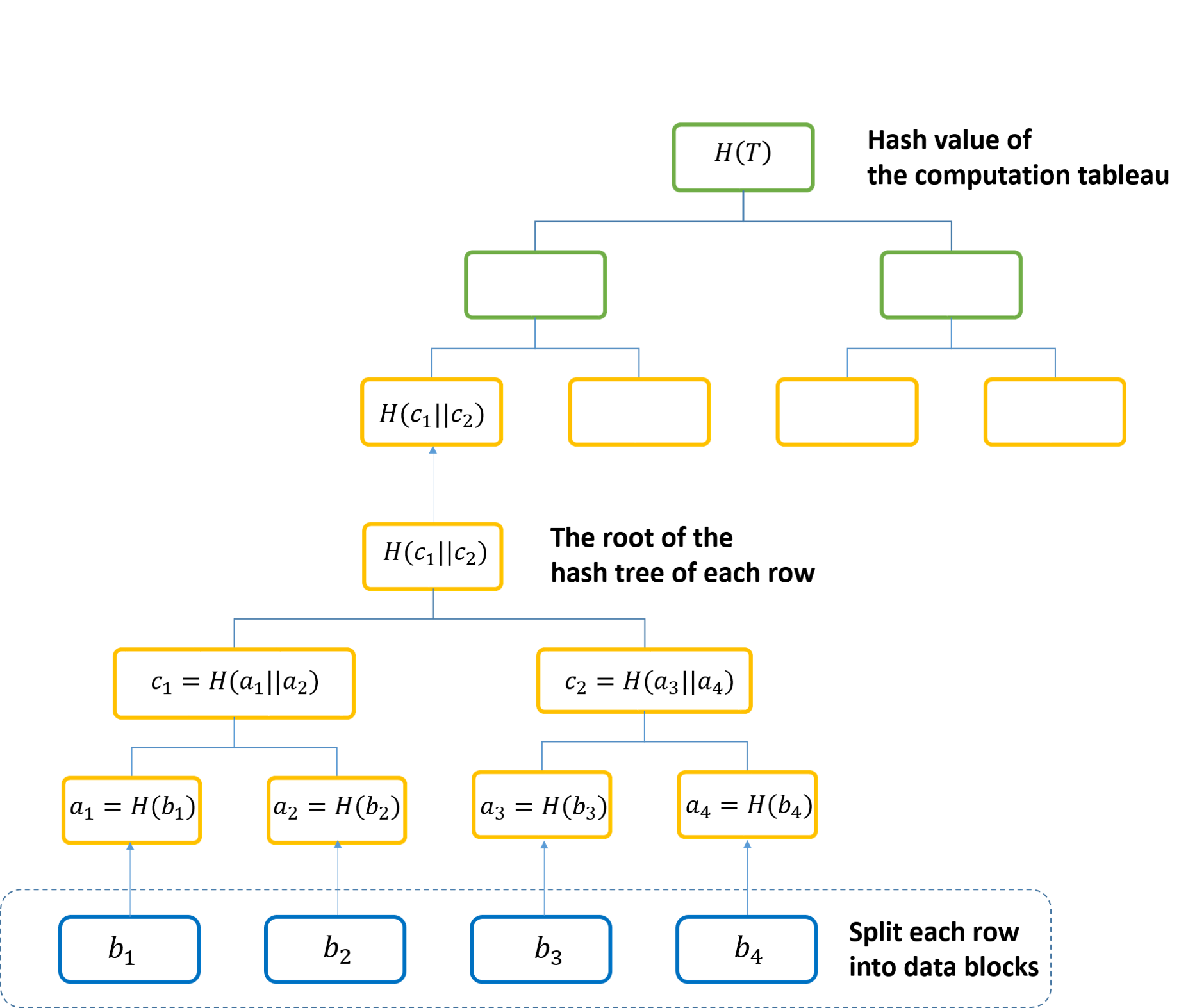}
    \label{fig:tree}
\end{figure}

\subsection{An Informal Description of Mechanism Design Goals}
\label{sect:goals_informal}
In this section, we describe some basic game theory concepts as well as our mechanism design goals in an informal way.
Formal definitions of those are deferred to Sect.~\ref{sect:goal}.

Given a game, we define the \emph{utility} $\mu$ of each agent as the payment she receives minus the amount of effort she spends.
For each agent, we define the \emph{pure (mixed) strategy} $s$ as a mapping such that $s$ maps every stage of the game and every information set the agent might receive in that stage to an action (the distribution over all possible actions) in the next stage. In other words, $s$ tells an agent what to do for every possible situation throughout the game and determines the action (the distribution over all possible actions) the agent will take at any stage of the game. We call $(s^A,s^B)$ a \emph{strategy profile} of Alice and Bob where $s^A$ is Alice's strategy and $s^B$ is Bob's strategy.

A strategy profile $\mathbf{s}=(s^A,s^B)$ is a \emph{Nash equilibrium} if neither Alice nor Bob can deviate to another strategy to obtain a strictly higher utility when $\mathbf{s}$ is mutually known by Alice and Bob.
We say a mechanism is \emph{truthful} if the strategy profile where each of the agents plays a truthful strategy that always submits the correct answer to the verifier is a Nash equilibrium.
We say a mechanism is \emph{strongly truthful} if the truth-telling strategy profile is the only Nash equilibrium in some sense.

We have two goals for our mechanism design.
Other than the \emph{strong truthfulness}, our second goal is that, through an iterative query process, the mechanism must be able to verify the correctness of the answer that Alice and Bob provide in logarithmic time: $O(\log S+\log T)$.

\section{Minimal Refereed Mechanism}

Remember that the high level idea is from \emph{the prisoner's dilemma}. When Alice and Bob are paid based on whether they have the same output, they may collude to obtain agreement without exerting any effort. To solve this ``collusion'' problem, our \emph{minimal refereed mechanism} pays an agent who betrays the collusion and \emph{tells the truth} a large reward.  We also draw on techniques from IP so that the verification is possible with dramatically fewer resources than the computation itself requires.

\paragraph{Minimal Refereed Mechanism ($\mathcal{M}$, $f_H(\cdot)$, $\mathcal{AP}$, $d_1$, $d_2$):}
\begin{description}
\item[Step 1] The verifier samples a hash key $k\sim\gen(1^n)$ and assigns a program $\mathcal{M}$ to both Alice and Bob and asks them to commit to $\mathbf{r}=f_H(\mathcal{M})$. We call $f_H(\cdot)$ the \emph{commitment function}, which is a mapping from a program $\mathcal{M}$ to a report profile $(a,t,r)$, where $a$ is the output of $\mathcal{M}$, $t$ is the time spent in computing $\mathcal{M}$ (i.e., the number of non-blank rows in $\T$), and $r$ is the value of the root of the Merkle hash tree $\MT_H(\mathcal{T})$, for $\T$ being the computation table of $\M$.
\item[Step 2 (Computation Stage)] Alice and Bob do the computation separately and commit $\mathbf{r}^A$ and $\mathbf{r}^B$ to the verifier privately.
\item[Step 3 (Arbitration Stage)] If $\mathbf{r}^A=\mathbf{r}^B$, the verifier pays both Alice and Bob the amount $d_1(t)$ that depends on and is monotone in $t$, the time spent as reported by both agents. Otherwise, the verifier runs an \emph{arbitration process} in which the verifier asks both agents several questions and finally announces for each of Alice and Bob if she(he) is a \emph{winner}. The verifier pays each winner $d_2\gg d_1(T)$ and each loser $0$ (recall that $T$ is the number of rows in $\T$ including blank rows, which is an upper bound on $t$).
\end{description}

\paragraph{MRM: Arbitration Process $\mathcal{AP}$}
The arbitration process shown in Algorithm~\ref{alg:AP} takes in the two commitments $\br^A,\br^B$ that are different, and outputs either ``winner'' or ``loser'' for each of Alice and Bob. Below we give a verbal summary of Algorithm~\ref{alg:AP}.

When Alice and Bob agree with the value of the root of $\MT_H(\T)$, then they must disagree on either $a$ or $t$.
The verifier first checks, in the case $t^A\neq t^B$, if the last rows from both agents contains the halting state, and if the agent reporting the larger running time has a halting state in the middle row $\min\{t^A,t^B\}$ (Line~4 to Line~8).
Notice that in a correct execution of $\M$, the halting state should appear and only appear in the last non-blank row.
The arbitration process terminates immediately if an agent is caught for violating this, and moves on otherwise.

The verifier then asks their values for the root of the subtree corresponding to the $\min\{t^A,t^B\}$-th row (Line~9):
\begin{itemize}
    \item If they agree with each other, the verifier checks the path from the root of this subtree to the first block of the $\min \{t^A,t^B\}$-th row (Line~11), and announces the winner or the loser based on if the agent can provide a consistent path (Line~12).
    \item If they disagree, then the verifier checks the consistency of the path from the root of this subtree to the root of the entire tree $\MT_H(\T)$ (Line~14), and announces the winner or the loser based on the consistency (Line~15).
\end{itemize}

When Alice and Bob disagree on the value of the root of $\MT_H(\T)$, the verifier runs a subroutine \texttt{FirstDivergence}($r^A$, $r^B$, $r$) defined in Algorithm~\ref{alg:firstDiv} to figure out \emph{the first place on which Alice disagrees with Bob in $\mathcal{T}$} (Line~18).
The subroutine \texttt{FirstDivergence} takes in three inputs: a hash value $v^A$, a hash value $v^B$, and a node $v$ in the Merkle tree, where $v^A$ and $v^B$ are the values Alice and Bob provide (respectively) for the node $v$, and $v^A\neq v^B$.
It outputs either the identity of the agents (either one or two) which are identified as ``liars'', or a block $b_{ij}$ in $\T$.
As a brief description, \texttt{FirstDivergence} travels from the node $v$ to a leaf based on the following rules: at a node $u$ with children $u_1,u_2$ during the traversal, \texttt{FirstDivergence} checks if the hash is consistent, i.e., if $H(u_1^A||u_2^A)=u^A$ and $H(u_1^B||u_2^B)=u^B$, and moves on to the left-most child $u_i$ with $u_i^A\neq u_i^B$.
If an inconsistency in the hash computation is found during the traversal, \texttt{FirstDivergence} terminates and outputs the identity of the agent(s) with the inconsistent computation as the liar(s).
Otherwise, the traversal process will not end until a leaf is reached.
This is because we start at $v$ on whose value Alice and Bob disagree, and Alice and Bob must disagree on at least one of the children $u_1,u_2$ if they disagree on the parent $u$.
When the traversal ends at a leaf $r_{ij}$, assuming both agents have not broken the hash function, we know that $b_{ij}$ is the first block where Alice and Bob disagree.
Intuitively, \texttt{FirstDivergence} performs a binary search, viewing the Merkle tree as the binary search tree, and checks the consistency of hashing during the search.

If \texttt{FirstDivergence}($r^A$, $r^B$, $r$) outputs the identity of the liar(s), the algorithm terminates by announcing the liar(s) being the loser(s) (Line~20).
If the output is a block $b_{ij}$, then we consider two cases: $i=1$ and $i>1$.
\begin{itemize}
\item  If $i=1$, i.e., the block is in the first row, since the first row of $\T$ encodes the input of $\M$, it is easy for the verifier to check the correctness of $b_{ij}^A$ and $b_{ij}^B$ by herself (Line~24 and Line~25).
\item  If $i>1$, i.e., the block is in a middle row, the verifier asks Alice and Bob the value of the corresponding block $b_{(i-1)j}$ in the previous row which contains the active region (Line~27).
If Alice and Bob agree on $b_{(i-1)j}$, the verifier calculates $b_{ij}$ by himself and spot the liar(s) who has a different value than the verifier (Line~29 and Line~30).
If Alice and Bob disagree on $b_{(i-1)j}$, the verifier checks the consistency of the path from the leaf $r_{(i-1)j}$ all the way to the root $r$ of the Merkle tree (Line~32 and Line~33). \end{itemize}

\begin{algorithm}[!ht]
  \caption{The arbitration process $\AP$}
  \label{alg:AP}
  \begin{algorithmic}[1]
    \STATE\textbf{Input}: $\br^A=(a^A,t^A,r^A)$ and $\br^B=(a^B,t^B,r^B)$ with $\br^A\neq\br^B$
    \STATE\textbf{Output}: $o^A,o^B\in\{1,0\}$ where $1$ stands for ``winner'' and $0$ stands for ``loser''
    \STATE\textbf{if} $r^A=r^B$:
    \STATE\hspace{0.3cm}\textbf{if} $t^A\neq t^B$, assume without loss of generality $t^A>t^B$:
    \STATE\hspace{0.3cm}\hspace{0.3cm}request values $b_{t^A,1}^A,b_{t^B,1}^A$ from Alice and value $b_{t^B,1}^B$ from Bob\footnotemark
    \STATE\hspace{0.3cm}\hspace{0.3cm}\textbf{if} $b_{t^B,1}^B$ does not contain a halting state: \textbf{return} $o^A=1$ and $o^B=0$
    \STATE\hspace{0.3cm}\hspace{0.3cm}\textbf{if} $b_{t^B,1}^A$ contains a halting state or $b_{t^A,1}^A$ does not: \textbf{return} $o^A=0$ and $o^B=1$
    \STATE\hspace{0.3cm}\textbf{endif}
    \STATE\hspace{0.3cm}let $t=\min\{t^A,t^B\}$, and request the values $r_t^A,r_t^B$ from both agents
    \STATE\hspace{0.3cm}\textbf{if} $r_t^A=r_t^B$:
    \STATE\hspace{0.3cm}\hspace{0.3cm}request the relevant values and check if the two paths $r_{t1}^A$-$r_t^A$ and $r_{t1}^B$-$r_t^B$ are consistent
    \STATE\hspace{0.3cm}\hspace{0.3cm}\textbf{return} $o^A=\indicator(r_{t1}^A$-$r_t^A\mbox{ is consistent})$, $o^B=\indicator(r_{t1}^B$-$r_t^B\mbox{ is consistent})$
    \STATE\hspace{0.3cm}\textbf{else}
    \STATE\hspace{0.3cm}\hspace{0.3cm}request the relevant values, and check if the two paths $r_{t}^A$-$r^A$ and $r_{t}^B$-$r^B$ are consistent
    \STATE\hspace{0.3cm}\hspace{0.3cm}\textbf{return} $o^A=\indicator(r_{t}^A$-$r^A\mbox{ is consistent})$, $o^B=\indicator(r_{t}^B$-$r^B\mbox{ is consistent})$
    \STATE\hspace{0.3cm}\textbf{endif}
    \STATE\textbf{else}
    \STATE\hspace{0.3cm}implement \texttt{FirstDivergence}($r^A,r^B,r$)
    \STATE\hspace{0.3cm}\textbf{if} the output is the identity of one or two agents:
    \STATE\hspace{0.3cm}\hspace{0.3cm}\textbf{return} $o^A=\indicator(A\mbox{ is not an output})$ and $o^B=\indicator(B\mbox{ is not an output})$
    \STATE\hspace{0.3cm}\textbf{else}
    \STATE\hspace{0.3cm}\hspace{0.3cm}let $b_{ij}$ be output of FirstDivergence($r^A,r^B,r$) (we know $b_{ij}^A\neq b_{ij}^B$)
    \STATE\hspace{0.3cm}\hspace{0.3cm}\textbf{if} $i=1$:
    \STATE\hspace{0.3cm}\hspace{0.3cm}\hspace{0.3cm}request $b_{ij}^A$ and $b_{ij}^B$, and check their correctness\footnotemark
    \STATE\hspace{0.3cm}\hspace{0.3cm}\hspace{0.3cm}\textbf{return} $o^A=\indicator(b_{ij}^A\mbox{ is correct})$ and $o^B=\indicator(b_{ij}^B\mbox{ is correct})$
    \STATE\hspace{0.3cm}\hspace{0.3cm}\textbf{else}
    \STATE\hspace{0.3cm}\hspace{0.3cm}\hspace{0.3cm}request $b_{(i-1)j}^A$ and $b_{(i-1)j}^B$ where $j$-th block in row $(i-1)$ contains the active region\footnotemark
    \STATE\hspace{0.3cm}\hspace{0.3cm}\hspace{0.3cm}\textbf{if} $b_{(i-1)j}^A=b_{(i-1)j}^B$:
    \STATE\hspace{1.2cm}compute $b_{ij}$ from $b_{(i-1)j}^A=b_{(i-1)j}^B$, and check the correctness of $b_{ij}^A$ and $b_{ij}^B$
    \STATE\hspace{1.2cm}\textbf{return} $o^A=\indicator(b_{ij}^A\mbox{ is correct})$ and $o^B=\indicator(b_{ij}^B\mbox{ is correct})$
    \STATE\hspace{0.3cm}\hspace{0.3cm}\hspace{0.3cm}\textbf{else}
    \STATE\hspace{1.2cm}request the relevant values and check the consistency of $r_{(i-1)j}^A$-$r^A$ and $r_{(i-1)j}^B$-$r^B$
    \STATE\hspace{1.2cm}\textbf{return} $o^A=\indicator(r_{(i-1)j}^A$-$r^A\mbox{ is consistent})$ and $o^B=\indicator(r_{(i-1)j}^B$-$r^B\mbox{ is consistent})$
    \STATE\hspace{0.3cm}\hspace{0.3cm}\hspace{0.3cm}\textbf{endif}
    \STATE\hspace{0.3cm}\hspace{0.3cm}\textbf{endif}
    \STATE\hspace{0.3cm}\textbf{endif}
    \STATE\textbf{endif}
  \end{algorithmic}
\end{algorithm}

\begin{algorithm}[!ht]
  \caption{Function \texttt{FirstDivergence}($v^A$, $v^B$, $v$) in $\AP$}
  \label{alg:firstDiv}
  \begin{algorithmic}[1]
    \STATE\textbf{Input}: a node $v$, two hash values $v^A,v^B$ with $v^A\neq v^B$
    \STATE\textbf{Output}: a set of liars $\ell\subseteq\{A,B\}$, \emph{or} a block $b_{ij}$
    \STATE\textbf{if} $v:=r_{ij}$ is a leaf:
    \STATE\hspace{0.3cm}request $b_{ij}^A$ and $b_{ij}^B$
    \STATE\hspace{0.3cm}\textbf{if} $\exists X\in\{A,B\}:H(b_{ij}^X)\neq r_{ij}^X$: \textbf{return} $\ell$ which includes all such $X$
    \STATE\hspace{0.3cm}\textbf{return} $b_{ij}$
    \STATE\textbf{else}
    \STATE\hspace{0.3cm}let $v_1,v_2$ be the children of $v$, and request $v_1^A,v_2^A,v_1^B,v_2^B$ \footnotemark
    \STATE\hspace{0.3cm}\textbf{if} $\exists X\in\{A,B\}:H(v_1^X||v_2^X)\neq v^X$: \textbf{return} $\ell$ which includes all such $X$
    \STATE\hspace{0.3cm}let $\displaystyle i=\min_{i'\in\{1,2\}:v_{i'}^A\neq v_{i'}^B}i'$, and implement \texttt{FirstDivergence}($v_i^A$, $v_i^B$, $v_i$)
    \STATE\textbf{endif}
  \end{algorithmic}
\end{algorithm}

\paragraph{Complexity analysis of $f_H$ and $\mathcal{AP}$:}
The time complexity of computing $f_H$ is $O(S+T\log S+T)$ since for every $i$-th, $(i+1)$-th row, given the hash tree of the $i$-th row, we only need to modify the path from the active region to the root ($O(\log S)$) to obtain the hash tree of $(i+1)$-th row. The time complexity of $\mathcal{AP}$ is $O(\log S+\log T)$. The main computations are the computation of \texttt{FirstDivergece} and the consistency check of paths. Both of them require $\log S+\log T$ time. The verifier needs $O(\log S+\log T)$ space to record the position of the leaf (which represents the path from the root to that leaf). Thus, a verifier, who has the ability to run a computation that needs $O(\log S+\log T)$ time and $O(\log S+\log T)$ space, can run an MRM with $\mathcal{AP}$ as the arbitration process.
This shows the achievement of the second goal mentioned in the last section.

\section{Mechanism Design Goals Revisited}
\label{sect:goal}
Under the mechanism MRM$(\M,f_H(\cdot),\AP,d_1,d_2)$, Alice and Bob are playing a game in \emph{the extensive-form}.
A strategy $s$ should specify an action at each information set.
To make the notion of Nash equilibrium make sense, we need to define a \emph{strategy space} which includes all the possible strategies.
This seemly easy task turns out to be tricky.
For example, at an information set in the middle of the arbitration process where the agent must supply the hash value of certain node, $s$ should specify this value so that the agent can report it to the mechanism.
A natural definition for the space of all possible actions here is the space of all valid hash values: $\{0,1\}^{\ell(n)}$.
However, this is problematic: we know that in this space some of the values make the hashing consistent, and we need to account for the cost/effort that agent spends to find those values instead of just ``selecting'' one of them as an action.
In this section, we aim to provide rigorous definitions for those game theory terms such as strategy space, utility function, equilibrium concepts, as well as our notions of truthfulness and strong truthfulness.

\addtocounter{footnote}{-4}
\stepcounter{footnote}\footnotetext{We assume that only the first block in the output row (the last non-blank row) of $\T$ is non-blank, as the output of a program has a small size in most cases. We can, for example, announce an agent as a loser if he provides an over-sized output.}
\stepcounter{footnote}\footnotetext{The first row encodes the input of the program $\M$, so the verifier can easily check the correctness.}
\stepcounter{footnote}\footnotetext{Occasionally, the active region is at the boundary of the two blocks $b_{(i-1)j}$ and $b_{(i-1)(j\pm1)}$. In this case we request two blocks from each agent, and the remaining part of the algorithm stays the same.}

We start by defining a \emph{strategy}.
Before this, we adopt a binary string representation of the position of a node in a binary tree: the root node is represented by the empty string $\emptyset$; a string of length $\ell$ represent a node at level $\ell+1$ such that, in the unique shortest path connecting this node and the root, the $i$-th edge in the up-to-down order connects a node to its left child if the $i$-th bit of this string is $0$, and it connects a node to its right child if the $i$-th bit is $1$.
When dealing with a Merkle tree, we treat a data block $b_{ij}$ as the unique child of $r_{ij}$ (so that those $r_{ij}$'s are no longer leaves, and the leaves are those $b_{ij}$'s).
For example, the node $c_2$ in the tree shown in Fig.~\ref{fig:tree} is represented by the string $001$, and the data block $b_2$ is represented by $00010$.
For a binary string $p$, we denote by $\pi(p)$ the node in a binary tree that $p$ represents.

\begin{definition}[strategy]
A \emph{pure strategy} $s:=(s_c,s_{ap})$ is a collection of two deterministic Turing machines $s_c,s_{ap}$ where
\begin{itemize}
    \item $s_c(\M,H^k)$ takes in a program $\M$ and a hash function $H^k$ with key $k$ as inputs, and outputs a triple $(a,t,h)$ where $a$ is a string, $t$ is a positive integer, and $h$ is a binary string of length $\ell(n)$ (where $\ell(\cdot)$ is the parameter in the hash function, see Definition~\ref{def:hashFunction}), and
    \item $s_{ap}(\M,H^k,p,\history)$ takes as inputs a program $\M$, a hash function $H^k$ with key $k$, a binary string $p$, and a string $\history$, and outputs a binary string $h$,
\end{itemize}
such that
\begin{itemize}
    \item when the agent is asked to commit $\br$ in the computation stage of MRM, the strategy $s$ specifies $\br=s_c(\M,H^k)$, and
    \item when, in the arbitration stage, the agent is request for the hash value of a node (which may possibly be a data block) in $\MT_H(\T)$ represented by $\pi(p)$, the strategy $s$ specifies $s_{ap}(\M,H^k,p,\history)$, where the string $\history$ records all the historical queries in $\AP$.
\end{itemize}
\stepcounter{footnote}\footnotetext{Both Alice and Bob are supposed to construct a Merkle tree for all the $T$ rows in $\M$ including those blank rows after the output row, so the two trees they construct should have the same structure. Therefore, when $v^A$ has children, $v^B$ should also have. If this is not the case, we can directly output the agent with incorrect tree structure as a liar.}
A \emph{mixed strategy} is a probability distribution over the space of all possible pure strategies as it is in standard game theory.\footnote{Alternatively, we can view a mixed strategy as at least one of $s_c,s_{ap}$ being a probabilistic Turing machine.}
\end{definition}

\begin{definition}[utility and effort]\label{def:utility}
Given a strategy profile $(s^A=(s_c^A,s_{ap}^A),s^B=(s_c^B,s_{ap}^B))$, the \emph{utility} of Alice $\mu^A(s^A,s^B)$ is the expected payment she receives minus the expected \emph{effort} she spent on implementing the two Turing mechanisms, where the \emph{effort} is the total number of the state transitions in the implementations of $s_c^A,s_{ap}^A$ when answering the queries throughout the execution of MRM (in particular, $s_c^A$ is implemented exactly once, while the number of implementations of $s_{ap}^A$ depends on both $s^A$ and $s^B$), and the expectation is taken over the randomness in both the hash key sampling and the strategies that are possibly mixed.
The utility for Bob, $\mu^B(s^A,s^B)$, is defined similarly.
\end{definition}

Once we have defined strategies and utilities, the definition of Nash equilibrium becomes standard, so we omit it here.

The natural way to define the truthful strategy is to require that the strategy always specifies the correct answers in both the computation stage and the arbitration stage.
However, under such definition, it is impossible that the truth-telling profile is the only Nash equilibrium (or MRM is strongly truthful according to our definition in Sect.~\ref{sect:goals_informal}).
For example, it will also be a Nash equilibrium if Bob plays a truthful strategy and Alice plays a strategy that specifies the correct answers for the computation stage but incorrect answers at a certain information set in the arbitration stage  which she believes will never be reached.
However, as the verifier's ultimate goal is to know the output $a$ of the program $\M$, what we really need is that both agents are honest in the computation stage only.
We discuss in Sect.~\ref{sect:otherNotions} about other common notions, such as dominant strategy truthfulness and truth-telling as a subgame perfect Nash equilibrium, and why we do not use them.

\begin{definition}[truthful strategy]
A strategy $s=(s_c,s_{ap})$ is \emph{truthful} if $s_c(\M,H^k)=f_{H^k}(\M)$.
We use $\mathbf{T}$ to denote the set of all truthful strategies.
We also define a specific truthful strategy $\tau=(\tau_c,\tau_{ap})\in\mathbf{T}$ called \emph{the absolutely truthful strategy}, which, in addition, satisfies that $\tau_{ap}(\M,H^k,p,\history)$ outputs the correct hash value of the node $\pi(p)$ in $\MT_H(\T)$ for every $p$ such that $\pi(p)\in\MT_H(\T)$.
\end{definition}

\begin{definition}[(strongly) truthful MRM]
The mechanism MRM is \emph{truthful} if $(\tau,\tau)$ is a Nash equilibrium, and is \emph{strongly truthful} if, in addition, $(s^A,s^B)$ being a Nash equilibrium implies $s^A,s^B\in\mathbf{T}$.
\end{definition}

Throughout the paper, we use $M(i)$ to denote the (minimum) amount of effort required to compute the first $i$ rows of $\T$ and the Merkle tree $\MT_H(\T)$, $M_c:=M(T)$ to denote the maximum effort a truthful agent can spend in the computation stage, and $M_{ap}$ to denote the maximum possible effort a truthful agent can spend in the arbitration stage.
In particular, for the absolutely truthful strategy $\tau=(\tau_c,\tau_{ap})$, the effort of computing $\tau_c(\M,H)=(a,t,h)$ is $M(t)$, and we use $M_c^\tau$ to denote this.
From our discussion earlier, we have $M_c=O(S+T\log S+T)$ and $M_{ap}=O(\log S + \log T)$.

\section{Strong Truthfulness of MRM}
\label{sect:main}
In this section, we prove that the mechanism MRM is strongly truthful with appropriate choices of parameters.

\begin{theorem}\label{thm:main}
For any program $\mathcal{M}$, if we choose large enough security parameter $n$ such that $n>2\log(M_c+M_{ap})+C$ for some constant $C$, there exists $b$ such that MRM ($\mathcal{M}$, $f_H$, $\mathcal{AP}$, $d_1$, $d_2$) with $d_2=2(M_c+M_{ap})+2b$ and $d_1(t)=M(t)+b$ is strongly truthful, where $t$ is the time spent in computing $\M$ as reported by both agents (their reports are the same if payment $d_1(\cdot)$ is considered).
\end{theorem}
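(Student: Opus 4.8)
The plan is to prove the two halves of strong truthfulness separately: first that $(\tau,\tau)$ is a Nash equilibrium, and then that every Nash equilibrium lies in $\mathbf{T}$. Both halves rest on two structural lemmas about the arbitration process $\AP$ that I would establish first. \emph{(Completeness)} If an agent plays the absolutely truthful strategy $\tau$, then whenever $\AP$ is invoked that agent is declared a winner, unless the opponent exhibits a hash collision: $\tau$ supplies the true, internally consistent Merkle values, so it passes every consistency check (lines 11, 14, 32) and every ground-truth check the verifier performs on the input row (line 24), on a single transition (line 29), and on the halting state (lines 6--7). \emph{(Soundness)} When $\br^A\neq\br^B$, except with probability at most $e^2/2^n$ (where $e$ bounds the total effort/oracle-queries of the agents), $\AP$ declares \emph{at most one} winner, and if one agent commits the true triple $\br^\ast:=f_{H^k}(\M)$ while the other commits $\br\neq\br^\ast$, the former wins and the latter loses. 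I would prove Soundness by walking through the return lines of Algorithm~\ref{alg:AP}: the ground-truth returns (lines 20, 25, 30) can crown at most the agent matching the verifier's own computation, and each consistency return (lines 12, 15, 33) can crown two winners only if two distinct values hash to a common ancestor, i.e. a collision, once collision-resistance and the random-oracle model are used to argue that a consistent revealed path forces the unique true preimage at each leaf.

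For the first half, suppose Bob plays $\tau$ and Alice deviates. If her deviation still produces $\br^A=\br^B$, then because the root $r^\ast$ can only be obtained by querying the oracle on the genuine computation, the random-oracle assumption forces her effort to be at least $M(t^\ast)=M_c^\tau$, so her utility is at most $d_1(t^\ast)-M_c^\tau=b$, matching $\tau$. If instead $\br^A\neq\br^B$, Soundness gives $\Pr[\text{Alice wins}]\le\min(1,e^2/2^n)$ for effort $e$, whence her utility is at most $\min(1,e^2/2^n)\,d_2-e$; viewed as a function of $e$ this is maximized at $e=0$ with value $0$ once the parameters guarantee $d_2<2^{n/2}$, so the deviation yields at most $0<b$. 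Hence $\tau$ is a best response to $\tau$ and $(\tau,\tau)$ is a Nash equilibrium. The parameter bookkeeping is routine: pick $b=\Theta(M_c+M_{ap})$ and a constant $C$ large enough that the hypothesis $n>2\log(M_c+M_{ap})+C$ forces both $d_2=2(M_c+M_{ap})+2b<2^{n/2}$ and all collision terms $e^2/2^n$ (for $e=O(M_c+M_{ap})$) to be negligibly small.

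For the second half I would argue by contradiction using the betrayal idea. Assume a Nash equilibrium $(s^A,s^B)$ in which some agent, say Alice, is not truthful, so her committed triple satisfies $\br^A\neq\br^\ast$. I would first record that in any equilibrium each agent's effort is at most $O(d_2)<2^{n/2}$, since over-spending is dominated by the minimal-effort strategy (the largest attainable payment is $d_2$); this keeps every collision probability negligible throughout. Now split on the realized play. If $\br^A=\br^B$ (both commit the same false answer), Alice's equilibrium utility is at most $d_1(t')\le M_c+b$, whereas deviating to $\tau$ makes her disagree with the still-lying Bob and, by Completeness, win $\AP$ for utility at least $d_2-(M_c+M_{ap})=(M_c+M_{ap})+2b$, a strict improvement. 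If $\br^A\neq\br^B$ and Bob is truthful, Soundness makes Alice lose, so her utility is at most $0$, while deviating to $\tau$ matches Bob and yields $b>0$. In the remaining case $\br^A\neq\br^B$ with both lying, Soundness ensures at most one winner (except negligibly), so at least one agent is a loser with utility at most $0$, and that agent strictly profits by switching to $\tau$ and winning against the still-lying opponent. In every case the supposed equilibrium admits a profitable deviation, a contradiction, so both agents must be truthful.

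The main obstacle is the Soundness Lemma, and specifically excluding the outcome in which both agents lie, disagree, and are nonetheless both declared winners. The danger lives in the $r^A=r^B$ branch (lines 4--15): an agent might report one value for a block $b_{t,1}$ during the halting check (lines 5--7) and a different, root-consistent value during the path check (lines 11--12), thereby passing both. Ruling this out is where I expect the real work. I would either invoke that the verifier records past answers and brands any agent who answers the same node inconsistently as a loser, or argue directly that surviving the halting check while also passing the path check against a \emph{shared} root $r_t$ forces two distinct preimages of the same leaf---a collision---so the event has probability at most $e^2/2^n$. Establishing this exclusivity cleanly (together with the symmetric arguments for lines 15 and 33, where two consistent paths to distinct committed roots likewise collapse to a collision), under the history-dependent strategy model of the paper, is the crux on which both the $(\tau,\tau)$ equilibrium and the no-other-equilibria conclusion depend.
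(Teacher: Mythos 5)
Your first half (that $(\tau,\tau)$ is a Nash equilibrium) and your two structural lemmas about $\AP$ essentially track the paper, which proves the same completeness/collision statements as its Lemmas~\ref{lem:epsilon-informative} and~\ref{lem:delta-zeta-truthful} and then the truthfulness claim as Lemma~\ref{lem:truthful}; even the ``crux'' you flag (an agent answering $b_{t1}$ inconsistently between the halting check and the path check) is resolved in the paper exactly as you guess, by reducing it to a collision on the path $r_{t1}$--$r_t$ (or $r_t$--$r$). One quantitative over-claim in your version: producing the true root does \emph{not} force effort at least $M_c^\tau$; an under-working agent can guess the correct commitment with some probability, which the paper carries through as the constant $\epsilon<1$ of Lemma~\ref{lem:epsilon-informative} rather than treating as zero. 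That is fixable. The genuine gap is in your second half.

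Your uniqueness argument deviates \emph{per realized outcome}: ``the loser strictly profits by switching to $\tau$.'' But the outcome of $(s^A,s^B)$ is random (over the hash key $k\sim\gen(1^n)$ and over mixing), a deviation to $\tau$ changes the payoff in \emph{all} realizations simultaneously, and in the realizations where the dishonest agent \emph{wins} $\AP$ (the paper's cases $C$ and $E$, which its Lemma~\ref{lem:delta-zeta-truthful} explicitly does not exclude, and which genuinely occur --- see the paper's Sect.~\ref{sect:otherNotions} example showing MRM is not dominant-strategy truthful) switching to $\tau$ strictly \emph{lowers} utility, since she wins either way but $\tau$ costs more effort. So a strategy that wins as a liar sufficiently often (roughly, with conditional probability exceeding $M^*/d_2$, where $M^*=M_c+M_{ap}$) can strictly beat the deviation to $\tau$, and your contradiction fails for such mixed equilibria; the identity of ``the loser'' is realization-dependent and cannot be the deviating agent by fiat. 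The paper's proof of Lemma~\ref{lem:focaltruthful} is built precisely around this obstruction: it writes a single aggregate inequality $\mu^A(s^A,s^B)-\mu^A(\tau,s^B)\geq 0$ over all thirteen disjoint outcome classes, bounds the agreement-without-full-effort and two-winner terms via $\epsilon$ and $\delta$, and then --- the step your proposal has no substitute for --- combines Alice's resulting inequality~(\ref{lab:3}) with Bob's symmetric inequality~(\ref{lab:4}): because the coefficient of $\Pr(E)$ strictly exceeds that of $\Pr(C)$ in one and vice versa in the other, both can hold only if $\Pr(C)=\Pr(E)=0$, after which $\Pr(B)=\Pr(F)=\Pr(G)=0$ and the $\epsilon$-, $\delta$-bounds kill the remaining bad cases. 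Without this symmetric two-agent linear argument (or an equivalent device ruling out ``war-of-attrition'' equilibria in which each agent sometimes wins $\AP$ dishonestly), your case analysis establishes at most that no profile with a \emph{deterministic} non-truthful outcome is an equilibrium, not strong truthfulness as defined.
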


We will prove this main theorem by showing the following four lemmas.

\begin{lemma}\label{lem:epsilon-informative}
Given a program $\M$, a hash function $(\gen,H)$ and a pure strategy $s=(s_c,s_{ap})\notin\tru$ such that the total effort of computing $s_c(\M,H^k)$ is strictly less than $M_c^\tau$, there exists $\epsilon<1$ such that
$\Pr_{k\sim\gen(1^n)}\left(s_c(\M,H^k)=f_{H^k}(\M)\right)<\epsilon.$
\end{lemma}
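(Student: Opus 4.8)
The plan is to work in the random oracle model and show that any strategy whose computation effort falls strictly below the honest minimum $M_c^\tau = M(t)$ cannot know the correct Merkle root, and therefore reproduces the honest commitment $f_{H^k}(\M)$ only with negligible probability over $k$; in particular the success probability is bounded away from $1$, which is exactly the claim.

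First I would separate the three coordinates of $f_{H^k}(\M) = (a,t,r)$. The answer $a$ and the running time $t$ are fixed functions of $\M$ and $q$ alone, since the execution of $\M$ never calls the hash function; only the root value $r = r(k)$ depends on the key. Hence it suffices to bound $\Pr_{k\sim\gen(1^n)}\bigl(s_c(\M,H^k)\text{ outputs the correct root }r(k)\bigr)$, because a wrong root already makes the whole triple wrong.

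The heart of the argument is a hash-chain claim in the random oracle model. For each node $v$ of $\MT_{H^k}(\T)$ let $w^*_v(k)$ denote its honest preimage (the concatenation of the honest values of its two children, or the data block $b_{ij}$ at a leaf). Since $H^k$ is a random oracle, a machine can output the honest value $r_v(k)=H^k(w^*_v)$ only by querying $H^k$ at $w^*_v$; otherwise that value is uniform on $\{0,1\}^{\ell(n)}$ and independent of its view, so it is matched with probability $2^{-\ell(n)}$. But querying $w^*_v$ requires knowing the honest values of both children of $v$, which recursively requires querying their preimages, down to the leaves. Thus, to output the correct root, $s_c$ must in effect query $w^*_v$ for every node $v$, i.e.\ compute the entire tree; by the minimality built into the definition of $M(\cdot)$, this costs at least $M(t)=M_c^\tau$. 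Since by hypothesis $s_c$ spends strictly less, for each key there is a lowest node whose honest preimage it never queries; propagating the conditional independence of that node's value up the $O(\log(ST))$-length path to the root, and union-bounding over the at most $M_c^\tau=\mathrm{poly}(n)$ queries (and over the $O(ST)$ nodes, to discard lucky collisions), the probability that $s_c$ queries the true root-preimage—and hence the probability it outputs $r(k)$—is at most $\mathrm{poly}(n)\cdot 2^{-\ell(n)}$. Taking $\epsilon$ to be any value strictly between this bound and $1$ finishes the proof; the hypothesis $s\notin\tru$ is consistent here, since a strategy correct on every key would, by this same argument, have to spend effort $M_c^\tau$ and hence already be truthful.

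I expect the main obstacle to be making the recursive ``one cannot know a node's hash without querying its preimage, which in turn forces knowing both children'' step fully rigorous against an arbitrary deterministic $s_c$: one must define the lowest un-queried node as a function of the oracle, argue its honest value is conditionally uniform and independent of $s_c$'s view, and carry out the union bound over all queries and nodes so as to rule out lucky collisions and guesses. Extra care is needed because the effort of $s_c(\M,H^k)$ may itself depend on $k$, so the ``lowest missing node'' must be tracked uniformly across the keys that contribute to the success event.
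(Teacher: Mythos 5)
Your proposal is correct and takes the same route as the paper: the paper's entire proof of this lemma is the single sentence that the claim ``follows immediately from our assumption that $H^k$ is a random oracle,'' and your preimage-extraction argument (the correct root is an oracle value that cannot be produced without recursively querying every honest node preimage, which by the minimality in the definition of $M(\cdot)$ costs at least $M_c^\tau$, so a cheaper strategy succeeds only with probability about $\mathrm{poly}(n)\cdot 2^{-\ell(n)}$) is precisely the standard way of substantiating that one-liner. The only caveat is that your quantitative bound is nontrivial only when $\ell(n)$ is large relative to the strategy's effort, but since the lemma merely asserts the existence of some $\epsilon<1$, and an unqueried oracle value is matched with probability $2^{-\ell(n)}<1$ in any case, your argument delivers the stated conclusion.
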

\begin{proof}
This follows immediately from our assumption that $H^k$ is a random oracle.
\end{proof}

\begin{lemma}\label{lem:delta-zeta-truthful}
An agent playing the absolutely truthful strategy $\tau$ always wins in the arbitration process $\AP$, regardless of the strategy the other agent plays.
Moreover, when the security parameter $n$ is large enough with $n>2\log(M_c+M_{ap})+15$, if there exists a dishonest agent that plays a pure strategy $s\notin\mathbf{T}$ and all dishonest agents spend effort at most $\zeta:=2^{(n-15)/2}$, the probability that $\AP$ announces two winners is smaller than $\delta:=2^{-10}$.
\end{lemma}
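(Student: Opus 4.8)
The statement bundles two essentially independent claims, and I would treat them separately.

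\medskip
\noindent\textbf{The honest agent always wins (first sentence).} I would prove this unconditionally by a case analysis over the terminal branches of $\AP$ (Algorithm~\ref{alg:AP}), invoking only that the absolutely truthful strategy $\tau$ reports the true answer, the true halting time $t^\ast$, and the genuine values of $\MT_H(\T)$. Because these are computed correctly and are mutually consistent: every consistency check applied to a $\tau$-path (Lines~12, 15, 33) returns ``consistent''; every correctness check on a $\tau$-block (Lines~25 and~30) returns ``correct''; in the halting-time dispute (Lines~5--8) the row $t^\ast$ of $\tau$ carries the halting state while no earlier or later row does, so $\tau$ is never the party caught; and \texttt{FirstDivergence} never flags $\tau$ as a liar, since $\tau$'s hash equalities hold at every node. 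The key observation is that in each branch the winner-indicator for a given agent depends only on that agent's own reported values, so $\tau$'s indicator is always $1$ regardless of the opponent's play---even if the opponent manufactures a collision. Hence $\tau$ wins in $\AP$ against any $s^B$.

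\medskip
\noindent\textbf{Two winners force a collision (structure).} For the second claim I would first localize where two winners can even occur. Every terminal branch except Lines~12, 15, and~33 declares at most one winner: the halting returns (Lines~6--7) name a single winner; the \texttt{FirstDivergence}-liar return (Line~20) flags at least one agent as a loser; and the correctness returns (Lines~25, 30) test $b_{ij}^A\neq b_{ij}^B$ (guaranteed by the output of \texttt{FirstDivergence}) against a single true value, so at most one of the two can be correct. It remains to show that two winners in any of the three ``both-consistent'' branches exhibits a hash collision. In Lines~12 and~15 (the $r^A=r^B$ case) the two checked paths share one endpoint---the agreed subtree root $r_t$ in Line~12, or the agreed global root $r$ in Line~15---while their other ends differ: in Line~12 we have $b_{t1}^A\neq b_{t1}^B$, because Lines~6--7 were survived only if exactly one of the two reported first blocks of row $t$ carries the halting state; in Line~15 we have $r_t^A\neq r_t^B$ by the guard of Line~13. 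Descending from the shared endpoint to the first level at which the reported on-path values diverge then produces two distinct children-tuples with the same parent hash, i.e.\ a collision.

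\medskip
\noindent\textbf{The hard case (Line~33).} I expect the main obstacle to be Line~33 in the $r^A\neq r^B$ branch, where the two consistency checks are against \emph{different} roots $r^A\neq r^B$, so a collision is not immediate. Here I would exploit the leftmost-diverging rule of \texttt{FirstDivergence} (Algorithm~\ref{alg:firstDiv}): when it returns $b_{ij}$, the traversal reached leaf $r_{ij}$ by always descending into the leftmost diverging child, so at the lowest common ancestor of the leaves $r_{(i-1)j}$ and $r_{ij}$ the child leading toward the earlier block $(i{-}1,j)$ was \emph{not} chosen and is therefore agreed between the two agents. Thus the subtree containing $r_{(i-1)j}$ has a common root value for $A$ and $B$. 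If Line~33 declares two winners, both agents supply consistent paths from $r_{(i-1)j}$ up to their roots, hence consistent paths from that common subtree-root value down to the blocks $b_{(i-1)j}^A\neq b_{(i-1)j}^B$ (these differ---that is exactly why we are in the Line~31 branch), which again yields a collision. I would take care to justify the ordering claim---that block $(i{-}1,j)$ precedes $(i,j)$ and so sits in the left subtree at the split---from the row-major layout of $\MT_H(\T)$.

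\medskip
\noindent\textbf{Probability bound.} Finally I would turn ``a collision is exhibited'' into the quantitative estimate. Under the random-oracle assumption a consistent hash can be produced only by an agent that queried the corresponding preimage, so every colliding pair of strings appearing in the transcript lies among the agents' oracle queries, whose number is bounded by the agents' total effort. The parameter choice $n>2\log(M_c+M_{ap})+15$ gives $M_c+M_{ap}<2^{(n-15)/2}=\zeta$, so whether the second agent is honest or dishonest its effort is below $\zeta$ and the combined query count is below $2\zeta$. The birthday bound then yields $\Pr[\text{collision}]\le \binom{2\zeta}{2}/2^n < 2\zeta^2/2^n = 2\cdot 2^{-15}=2^{-14}<2^{-10}=\delta$, which is exactly the claimed bound (with comfortable slack, explaining the constants).
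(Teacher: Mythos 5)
Your overall architecture matches the paper's proof closely: the honest agent wins because every check in $\AP$ tests only facts that $\tau$'s (correct, mutually consistent) values satisfy; two winners must force a hash collision, localized to the both-consistent terminal branches; the Line~33 case is resolved exactly as in the paper via the leftmost-descent property of \texttt{FirstDivergence} (your LCA argument is the paper's ``otherwise \texttt{FirstDivergence} will never end at $b_{ij}$'' step, and the paper likewise first notes $i>1$ because both winners must agree on the verifiable first row); and your quantitative bound via total effort below $2\zeta$ and the $t^2/2^n$ collision-resistance guarantee is the paper's argument with the arithmetic spelled out.

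However, your case analysis has a genuine hole: the paper's third case, where $r^A=r^B$ and $t^A=t^B$ but $a^A\neq a^B$. Here Lines~4--8 never execute (the guard $t^A\neq t^B$ fails), so your stated reason for $b_{t1}^A\neq b_{t1}^B$ in the Line~12 branch---that Lines~6--7 were survived only if exactly one reported first block of row $t$ carries the halting state---is vacuous: no halting-state check was performed on either block. As your argument stands, two colluding agents could present identical, fully consistent Merkle trees (so $b_{t1}^A=b_{t1}^B$ and both Line-12 path checks pass) while committing different answers $a^A\neq a^B$; then $\br^A\neq\br^B$ triggers $\AP$, both agents win, and no collision has been exhibited, breaking the $\delta$ bound. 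The paper closes this by asserting that $a^A\neq a^B$ implies $b_{t1}^A\neq b_{t1}^B$, which rests on the (implicit) convention that the reported output block must encode the agent's committed answer, a mismatch making that agent a loser. Once you state and invoke this convention, your shared-endpoint collision argument for Lines~12/15 covers this subcase verbatim; without it, the proof does not go through, since $\br^A\neq\br^B$ is compatible with the two reported trees agreeing everywhere.
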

\begin{proof}
Notice that $\AP$ only checks the validity of the halting state, the consistency of certain selected paths in $\MT_H(\T)$ and the correctness of a small part of computation (from $b_{(i-1)j}$ to $b_{ij}$), and only announce a loser if there is any mistake in these.
The honest agent playing $\tau$ will always win in $\AP$.

To show the second part, notice that the maximum effort an honest agent spends is at most $M+M_{ap}$ which is strictly less than $\zeta$.
Assuming dishonest agents spend effort at most $\zeta$, the total effort of $s_A,s_B$ in $\AP$ is at most $2\zeta<2^{(n-10)/2}$.
It is enough to show that, if at least one of $s_A,s_B$ is not in $\tru$, the event that both agents win in $\AP$ implies a collision in the hash function $H^k$.

We assume for the sake of contradiction that both agents win in $\AP$.
When $\AP$ is implemented, we know that $\br^A\neq\br^B$.
We consider three different cases: 1) $r^A\neq r^B$, 2) $r^A=r^B$ but $t^A\neq t^B$, and 3) $r^A=r^B$, $t^A=t^B$, but $a^A\neq a^B$.

In the first case, the subroutine \texttt{FirstDivergence} in Line~18 of Algorithm~\ref{alg:AP} is implemented, and the consistency of the path from the root $r$ to a leaf $r_{ij}$ is checked.
The nature of \texttt{FirstDivergence} ensures $r_{ij}^A\neq r_{ij}^B$.
Firstly, we have $i>1$, as Alice and Bob, both being winners, should agree on the first row of $\T$.
For $i>1$, if both agents also disagree on $r_{(i-1)j}$, moving from $r_{(i-1)j}$ to the root along the shortest path, they will eventually agree on the value of a middle node on the path (otherwise \texttt{FirstDivergence} will never end at $b_{ij}$), this implies a collision.
If they agree on $r_{(i-1)j}$ but disagree on $b_{(i-1)j}$, then again a collision is found.
If the agree on both $r_{(i-1)j}$ and $b_{(i-1)j}$, then at least one of Alice and Bob cannot make the transition $b_{(i-1)j}\mapsto b_{ij}$ correct, and will be announced as a loser at Line~30, which contradicts to our assumption that both agents win.

In the second case, both agents will be checked against the halting state validity.
Let $t=\min\{t^A,t^B\}$ and assume $t^A>t^B$. We know that $b_{t1}^A\neq b_{t1}^B$ if both agents survive from Line~4 to Line~8, as $b_{t1}^A$ should not contain a halting state while $b_{t1}^B$ should.
If both agents agree on the subtree root $r_t$, then a collision can be found on the path $r_{t1}$-$r_t$.
If both agents disagree on $r_t$, since in this case they agree on $r$, a collision can be found on the path $r_t$-$r$.

The third case is similar. Since $a^A\neq a^B$, we have $b_{t1}^A\neq b_{t1}^B$.
We can find a collision on either $r_{t1}$-$r_t$ or $r_t$-$r$.
\end{proof}

Note that Lemma~\ref{lem:delta-zeta-truthful} does not say anything about the situation where both two agents are dishonest and only one of them wins $\mathcal{AP}$.

\begin{lemma}
\label{lem:truthful}
MRM($\M$, $f_H$, $\mathcal{AP}$, $d_1$, $d_2$) is truthful if $d_1(t)=M(t)+b$, $b>\delta d_2+\frac{\epsilon M_c}{1-\epsilon}$ and $\zeta>d_2$.
\end{lemma}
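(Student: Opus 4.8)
The plan is to verify that $(\tau,\tau)$ is a Nash equilibrium by fixing Bob at the absolutely truthful strategy $\tau$ and showing that no deviation by Alice beats her equilibrium payoff. First I would compute the benchmark: when both play $\tau$ they commit the same $\br=f_{H^k}(\M)$, so they always agree, each is paid $d_1(t)=M(t)+b=M_c^\tau+b$, and each spends exactly $M_c^\tau$ on the commitment; hence $\mu^A(\tau,\tau)=b>0$ (and symmetrically for Bob). Since utilities are linear in a player's own mixing, it suffices to bound $\mu^A(s^A,\tau)$ over \emph{pure} deviations $s^A$, and by the symmetry of the mechanism this also covers Bob's side.

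Next I would split Alice's expected payment according to the two ways she can be paid. Because Bob plays $\tau$, agreement ($\br^A=\br^B$) forces $\br^A=f_{H^k}(\M)$, so on the agreement event Alice is paid exactly $M_c^\tau+b$ and must have spent at least $M_c^\tau$ computing the commitment; on the disagreement event $\AP$ runs, and by the first part of Lemma~\ref{lem:delta-zeta-truthful} the truthful Bob always wins, so Alice collects $d_2$ only on the ``two winners'' event. Writing $W$ for her total effort, these observations give
\[
\mu^A(s^A,\tau)\ \le\ \Pr(\text{agree})\,(M_c^\tau+b)+\Pr(\text{two winners})\,d_2-W .
\]

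I would then argue by cases on $s^A$. If $s^A\in\tru$, then Alice always agrees, the two-winner term vanishes, and $W\ge M_c^\tau$ yields $\mu^A\le b$. If $s^A\notin\tru$ but her effort exceeds $\zeta$, then since every payment is at most $d_2$ and $\zeta>d_2$, her utility is at most $d_2-\zeta<0<b$. The remaining case is $s^A\notin\tru$ with effort at most $\zeta$: here Lemma~\ref{lem:delta-zeta-truthful} bounds $\Pr(\text{two winners})<\delta$, and whenever her commitment effort is below $M_c^\tau$ Lemma~\ref{lem:epsilon-informative} bounds $\Pr(\text{agree})<\epsilon$. Feeding these into the displayed inequality, together with $W\ge\Pr(\text{agree})\,M_c^\tau$ and the hypotheses $d_1(t)=M(t)+b$ and $b>\delta d_2+\tfrac{\epsilon M_c}{1-\epsilon}$, should drive $\mu^A(s^A,\tau)\le b$, establishing that $\tau$ is a best response to $\tau$.

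The hard part will be this last case, and specifically controlling the ``jackpot'' term $\Pr(\text{two winners})\,d_2$ for a deviant who computes (almost) honestly yet occasionally disagrees and still wins $\AP$. The crude bound $\Pr(\text{two winners})<\delta$ alone only yields $\mu^A\le b+\delta d_2$, which is useless, so the delicate point is to \emph{pay for this term out of Alice's effort}: winning $\AP$ on a false commitment requires finding a hash collision, whose probability scales like $(\text{effort})^2/2^n$ rather than being a free constant. Making this precise---i.e.\ tying $\Pr(\text{two winners})$ to $W$ through collision-resistance (the finer estimate underlying Lemma~\ref{lem:delta-zeta-truthful}) and balancing it against $\zeta>d_2$ and the parameter inequality $n>2\log(M_c+M_{ap})+C$---is where the real work lies; the truthful and high-effort cases are essentially immediate.
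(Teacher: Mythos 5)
Your overall route coincides with the paper's proof of this lemma: fix Bob at $\tau$, compute the benchmark $\mu^A(\tau,\tau)=b$, reduce to pure deviations, observe that agreement with the truthful Bob forces the correct commitment while disagreement triggers $\AP$ in which Bob always wins (first part of Lemma~\ref{lem:delta-zeta-truthful}), and split deviations into truthful, effort above $\zeta$ (killed by $\zeta>d_2$), and effort at most $\zeta$ (handled by Lemmas~\ref{lem:epsilon-informative} and~\ref{lem:delta-zeta-truthful}). Up to the last case this is exactly the paper's argument.

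The gap is in how you propose to close that last case. You declare the crude bound $\Pr(\text{two winners})<\delta$ ``useless'' (yielding only $\mu^A\le b+\delta d_2$) and locate the real work in an effort-dependent collision estimate tying the two-winner probability to $W$. That is a misdiagnosis: no such refinement is needed, and the paper uses none. The repair is elementary bookkeeping. First, the two-winner event is contained in the disagreement event, so with $p=\Pr(\br^A=\br^B)$ the jackpot term is at most $(1-p)\,\delta d_2$, not $\delta d_2$. Second, split on commitment effort. If the deviant spends the full $M_c^\tau$ (your ``almost honest'' worry), subtract that effort: $\mu^A\le p(M_c^\tau+b)+(1-p)\delta d_2-M_c^\tau\le pb+(1-p)\delta d_2\le b$, using only $\delta d_2<b$, which is immediate from the hypothesis. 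If the commitment effort is below $M_c^\tau$, Lemma~\ref{lem:epsilon-informative} gives $p<\epsilon$, and since $p\mapsto p(M_c+b)+(1-p)\delta d_2$ is increasing, $\mu^A\le\epsilon(M_c+b)+(1-\epsilon)\delta d_2<b$ --- which is exactly what the hypothesis $b>\delta d_2+\frac{\epsilon M_c}{1-\epsilon}$ was chosen to deliver. This reproduces the paper's one-line bound $\mu^A(s^A,\tau)\le\epsilon(M_c+b)+(1-\epsilon)\max\{\delta d_2,\,d_2-\zeta\}<b$. The $(\text{effort})^2/2^n$ collision accounting you call for already lives inside Lemma~\ref{lem:delta-zeta-truthful}: two winners against $\tau$ imply a hash collision, and the effort cap $\zeta$ fixes the constant $\delta=2^{-10}$; re-deriving it here is redundant. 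Separately, your intermediate claims that agreement ``must have'' cost at least $M_c^\tau$ and that $W\ge\Pr(\text{agree})\,M_c^\tau$ are false for a guessing deviation, which can agree with probability up to $\epsilon$ at negligible cost; the case split above is what correctly replaces them.
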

\begin{proof}
When Bob plays $\tau$, we need to show that $\mu^A(\tau,\tau)\geq\mu^A(s^A,\tau)$ for any $s^A$, and we can assume without loss of generality that $s^A$ is a pure strategy.
If $s^A\in\mathbf{T}$, then $s_c^A=\tau_c$.
We know that the arbitration $\AP$ will not be implemented, so Alice receives the same payment and spends the same effort as she will be when she plays $\tau$.

If $s^A\notin\mathbf{T}$, then the best Alice can hope is that either she has a good guess for the commitment (with probability bounded by $\epsilon$ by Lemma~\ref{lem:epsilon-informative}) or she is lucky in $\mathcal{AP}$ and wins together with Bob with small effort (with probability bounded by $\delta$ by Lemma~\ref{lem:delta-zeta-truthful}) or she wins together with Bob via $\zeta$ effort.
Therefore, $\mu^A(s^A,\tau)\leq\epsilon(M_c+b)+(1-\epsilon)\left(\max\{\delta d_2+(1-\delta)0,d_2-\zeta\}\right)$.
Simple calculations reveal that the assumptions $b>\delta d_2+\frac{\epsilon M_c}{1-\epsilon}$ and $\zeta>d_2$ imply $\epsilon(M_c+b)+(1-\epsilon)\delta d_2< b$ and $\epsilon(M_c+b)+(1-\epsilon)(d_2-\zeta)<b$.
Thus, $\mu^A(s^A,\tau)< b=d_1(t)-M(t)=\mu^A(\tau,\tau)$.
\end{proof}

\begin{lemma}
\label{lem:focaltruthful}
MRM($\M$, $f_H$, $\mathcal{AP}$, $d_1$, $d_2$) is strongly truthful if $d_1(t)=M(t)+b$, $d_2=2(M_c+M_{ap})+2b$ for $\frac{\zeta-2(M_c+M_{ap})}{2}>b>\frac{2\delta (M_c+M_{ap})+\frac{\epsilon M_c}{1-\epsilon}}{1-2\delta}>0.$
\end{lemma}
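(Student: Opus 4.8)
The plan is to split the proof into the two halves of strong truthfulness. \emph{Truthfulness} (that $(\tau,\tau)$ is a Nash equilibrium) I would get for free from Lemma~\ref{lem:truthful}, since the hypotheses here are exactly a repackaging of its hypotheses. With $d_2=2(M_c+M_{ap})+2b=2(M_c+M_{ap}+b)$, the upper bound $b<\frac{\zeta-2(M_c+M_{ap})}{2}$ is equivalent to $\zeta>d_2$, and the lower bound on $b$ rearranges—clearing the denominator $1-2\delta$ and using $\delta d_2=2\delta(M_c+M_{ap}+b)$—to $b>\delta d_2+\frac{\epsilon M_c}{1-\epsilon}$. So Lemma~\ref{lem:truthful} applies verbatim. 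Everything nontrivial lives in the ``strong'' half: every Nash equilibrium $(s^A,s^B)$ must have $s^A,s^B\in\tru$.

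For the strong half I would argue by contradiction, assuming (treating pure strategies, as mixtures reduce to their support) that $s^A\notin\tru$; writing $p_X:=\Pr_{k\sim\gen(1^n)}(\br^X=f_{H^k}(\M))$, this says $p_A<1$. The engine is the \emph{deviation to $\tau$}. By the first part of Lemma~\ref{lem:delta-zeta-truthful}, $\tau$ always wins $\AP$, so against any fixed opponent its holder earns a guaranteed floor: with $d_2':=d_2-M_c^\tau-M_{ap}\ge M_c+M_{ap}+2b$, one computes $\mu^A(\tau,s^B)=p_Bb+(1-p_B)d_2'$ and $\mu^B(s^A,\tau)=p_Ab+(1-p_A)d_2'$. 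Since $d_2'>b$, each agent can secure at least $b$, and strictly more whenever the opponent is dishonest. The equilibrium inequalities $\mu^A(s^A,s^B)\ge\mu^A(\tau,s^B)$ and $\mu^B(s^A,s^B)\ge\mu^B(s^A,\tau)$ then give
\[
\mu^A(s^A,s^B)+\mu^B(s^A,s^B)\ \ge\ (p_A+p_B)\,b+(2-p_A-p_B)\,d_2',
\]
which strictly exceeds $2b$ as soon as $p_A<1$.

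Against this floor I would erect an upper bound on the same sum from the payment structure, fed by two structural facts. First, \emph{collusion on a false answer is never an equilibrium outcome}: if the agents agreed on an incorrect $\br$ with positive probability, either could defect to $\tau$, force disagreement, win $\AP$, and collect $d_2'\ge M_c+M_{ap}+2b$, which dominates the at-most-$d_1(T)=M_c+b$ earned by agreeing. Hence in equilibrium every agreement is on the correct value, so an agreeing agent committed correctly and—by Lemma~\ref{lem:epsilon-informative}, since sub-$M_c^\tau$ effort yields correctness only with probability $\epsilon$—essentially spent $\ge M_c^\tau$ effort, capping its net on such realizations at $d_1-M_c^\tau=b$. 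Second, on any disagreement realization at least one agent is dishonest, so (efforts staying below $\zeta$, because utilities are $\ge b>0$ while payments never exceed $d_2<\zeta$) Lemma~\ref{lem:delta-zeta-truthful} caps the probability of two winners by $\delta$; with $2d_1(T)=2M_c+2b\le d_2$ this controls the aggregate arbitration payout. Feeding these into $\mu^A+\mu^B=\expectation[\text{payment}]-\expectation[\text{effort}]$ and simplifying via $d_2=2(M_c+M_{ap}+b)$ should push the upper bound below the floor precisely in the window $b>\delta d_2+\frac{\epsilon M_c}{1-\epsilon}$, forcing $p_A=p_B=1$.

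I expect the delicate step to be the disagreement bound, specifically the regime where \emph{both} agents are dishonest. Lemma~\ref{lem:delta-zeta-truthful} limits only the probability that both agents win $\AP$; it is silent on a single dishonest agent winning \emph{cheaply} against an opponent whose computation diverges from the truth even earlier (this is genuinely possible and is exactly why $\tau$ is not dominant). Such a winner can pocket nearly $d_2$ while spending less than $M_c^\tau+M_{ap}$, so the naive ceiling $\expectation[\text{payment}]\le(1+\delta)d_2$ is far too loose to beat the floor when the aggregate dishonesty $2-p_A-p_B$ is small. The crux is therefore to charge the winner's surplus against the loser's shortfall realization-by-realization, and to use Lemma~\ref{lem:epsilon-informative} to certify that any commitment surviving the consistency and transition checks of $\AP$ costs effort commensurate with its reward, so that only $\epsilon$- and $\delta$-scale slack remains—slack the chosen window for $b$ is engineered to dominate.
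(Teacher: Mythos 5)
Your proposal is correct in substance and, modulo packaging, runs on the same engine as the paper's proof, so it is worth spelling out the correspondence. The paper partitions equilibrium outcomes into disjoint cases ($O$; $A_1$--$A_3$; $B$; $C$--$H_3$), writes the single deviation inequality $\mu^A(s^A,s^B)\geq\mu^A(\tau,s^B)$ for Alice (your ``floor'' from the fact that $\tau$ always wins $\AP$), substitutes the Lemma~\ref{lem:epsilon-informative} bound $\Pr(A_2,A_3)\leq\frac{\epsilon}{1-\epsilon}\Pr(B,C,E,F,G,H_2,H_3)$ and the Lemma~\ref{lem:delta-zeta-truthful} bound $\Pr(H_2,H_3)\leq\frac{\delta}{1-\delta}\Pr(C,E,F,G)$ (after first discarding, exactly as you do, strategies with effort $\geq\zeta$), and ends with one inequality per agent in which only the ``lone dishonest winner'' probability ($\Pr(C)$ for Alice, $\Pr(E)$ for Bob) sits on the favorable side, with a strictly smaller coefficient than its counterpart; pairing the two antisymmetric inequalities forces $\Pr(C)=\Pr(E)=0$, after which all remaining cases collapse and $\Pr(O)=1$. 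Your floor-versus-ceiling bound on $\mu^A+\mu^B$ is precisely the sum of these two per-agent inequalities, and your resolution of the crux---that a one-winner arbitration pays out only $d_2$ in total while the two deviation floors sum to $2d_2'\geq 2(M_c+M_{ap}+2b)>d_2$---is the same cancellation the paper extracts from pairing its inequalities (\ref{lab:3}) and (\ref{lab:4}); you correctly identified that Lemma~\ref{lem:delta-zeta-truthful} is silent on this case, which is exactly the subtlety the paper flags. Your verification that the hypotheses repackage those of Lemma~\ref{lem:truthful} ($\zeta>d_2$ and $b>\delta d_2+\frac{\epsilon M_c}{1-\epsilon}$) also matches the paper's opening step. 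Two caveats, neither fatal: the suggestion to ``use Lemma~\ref{lem:epsilon-informative} to certify that any commitment surviving $\AP$'s checks costs effort commensurate with its reward'' is not available---that lemma only bounds the probability of cheaply matching the truthful commitment $f_{H^k}(\M)$, and cheap $\AP$ winners genuinely exist (this is exactly why MRM fails to be dominant-strategy truthful, Sect.~\ref{sect:otherNotions})---but your aggregate floor renders it unnecessary, since the $d_2<2d_2'$ comparison already charges the winner's surplus to the loser's shortfall; and the per-realization defection argument against case $B$ is informal (a deviation is a whole strategy, not a move on one realization of a mixture), though again the summed inequality absorbs case $B$ with room to spare because $2d_2'>2(M_c+b)=2d_1(T)$. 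In short: same key lemmas, same deviation benchmark, same critical cancellation; the only genuine difference is bookkeeping---welfare aggregation up front versus the paper's explicit outcome taxonomy with the two per-agent inequalities combined at the end---and the aggregated form is arguably cleaner, at the cost of having to subtract honest winners' efforts carefully in the ceiling.
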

\begin{proof}
It is easy to see $\zeta>d_2$. By some calculations, we also have $b>\delta d_2+\frac{\epsilon M_c}{1-\epsilon}$.
Thus, we know MRM is truthful by Lemma~\ref{lem:truthful}.

To show MRM is strongly truthful, we consider any (pure or mixed) strategy Nash equilibrium $(s^A,s^B)$. We will show that $(s^A,s^B)$ is a Nash equilibrium if and only if both agents tell the truth in the computation stage: $s^A,s^B\in\mathbf{T}$.

We classify the possible outcomes of $(s^A,s^B)$ into the below disjoint cases:
\begin{description}[noitemsep,topsep=0pt,parsep=0pt,partopsep=0pt]
\item[$O$] Alice and Bob agree with each other and both of them tell the truth.
\item[$A$] Alice and Bob agree with each other on correct commitment but at least one of them is dishonest.
\begin{description}
\item[$A_1$] Alice is truthful but Bob spends effort less than the effort of $\tau_c(\M,H)$.
\item[$A_2$] Bob is truthful but Alice spends effort less than the effort of $\tau_c(\M,H)$.
\item[$A_3$] Both Alice and Bob spend effort less than the effort of $\tau_c(\M,H)$.
\end{description}
\item[$B$] Alice and Bob agree with each other on a wrong commitment.
\item[$C$] Alice wins in $\mathcal{AP}$ via a non-truthful strategy. Bob loses.
\item[$D$] Alice wins in $\mathcal{AP}$ via a truth-telling strategy. Bob loses.
\item[$E$] Bob wins in $\mathcal{AP}$ via a non-truthful strategy. Alice loses.
\item[$F$] Bob wins in $\mathcal{AP}$ via a truth-telling strategy. Alice loses.
\item[$G$] Both lose in $\mathcal{AP}$.
\item[$H$] Both win in $\mathcal{AP}$.
\begin{description}
\item [$H_1$] Alice is honest in the computation stage. Bob is dishonest.
\item [$H_2$] Bob is honest in the computation stage. Alice is dishonest.
\item [$H_3$] Both of them are dishonest in the computation stage.
\end{description}
\end{description}

In the remaining part of this proof, when we mentioned probability of certain event, the randomness is from both the hash key generation $k\sim\gen(1^n)$ and the mixed strategy.
We will show that $(s^A,s^B)$ \emph{is a Nash equilibrium if and only if the probability that outcome $O$ happens is 1.}
The same arguments in the proof of Lemma~\ref{lem:truthful} show the if direction, so we focus on the only-if direction.

Let $M^*=M+M_{ap}$ be the maximum effort the truth-telling strategy can cost in the whole MRM game.
For convenience, we write $\Pr(A,B)$ as the probability event $A$ or event $B$ happens. Since the cases we consider are disjoint, we can see $\Pr(A,B)=\Pr(A)+\Pr(B)$.

First of all, any pure strategies that spend effort at least $\zeta$ is strictly dominated, since the maximum possible payment $d_2$ is less than $\zeta$.
Therefore, in the remaining part of this proof, we assume with probability $0$ that one of Alice and Bob will spend effort at least $\zeta$.

We compare Alice's expected utility when Alice plays $\tau$ and Bob plays $s_B$ with the expected utility of $(s^A,s^B)$.
Notice that a truthful agent always has utility $M(t)+b-M(t)=b$ if $\AP$ is not launched, and has utility at least $d_2-M^*$ if $\AP$ is launched.
\begin{align*}
    & \mu^A(s^A,s^B)-\mu^A(\tau,s^B)\\
    \leq& \Pr(A_2,A_3)\cdot (M_c+b-b) +\Pr(B)\cdot (M_c+b-(d_2-M^*)) \\
    &   +\Pr(C)\cdot (d_2-(d_2-M^*)) + \Pr(E)\cdot (0-(d_2-M^*)) +\Pr(F)\cdot (0-b)\\
    &  + \Pr(G)\cdot (0-(d_2-M^*)) + \Pr(H_2) \cdot(d_2-b)+\Pr(H_3) \cdot(d_2-(\epsilon b+(1-\epsilon)(d_2-M^*))
\end{align*}
Since $(s^A,s^B)$ is a Nash equilibrium, $ \mu^A(s^A,s^B)-\mu^A(\tau,s^B)\geq 0$.
By simplification, rearranging terms and substituting $d_2-(\epsilon b+(1-\epsilon)(d_2-M^*))<d_2-b$ (which is straightforward to show) for the coefficient of $\Pr(H_3)$, we have
$$\Pr(A_2,A_3)\cdot M_c+\Pr(C)\cdot M^*+\Pr(H_2,H_3) \cdot(d_2-b)\geq$$
\begin{equation}\label{eqn:1}
    \qquad \Pr(B)\cdot (d_2-M_c-b-M^*)+\Pr(E,G)\cdot(d_2-M^*)+\Pr(F)\cdot b.
\end{equation}

Moreover, let $\Sigma$ be the event that Alice spends effort less than the effort of $\tau_c(\M,H)$, we know $\Pr(A_2,A_3)=\Pr((A_2,A_3)\land \Sigma)=\Pr(\Sigma)\Pr(A_2,A_3\mid\Sigma)\leq \epsilon \Pr(\Sigma)$ by Lemma~\ref{lem:epsilon-informative}, which implies
$\Pr(B,C,E,F,G,H_2,H_3)\geq (1-\epsilon)\Pr(\Sigma)$, and which further implies
\begin{equation}\label{eq:A}
    \Pr(A_2,A_3)\leq \frac{\epsilon}{1-\epsilon} \Pr(B,C,E,F,G,H_2,H_3)
    =\frac{\epsilon}{1-\epsilon}\left(\Pr(B,C,E,F,G)+\Pr(H_2,H_3)\right).
\end{equation}
Let $\Pi$ be the event that Alice is dishonest in the computation stage and $\AP$ is implemented. We know $\Pr(H_2,H_3)=\Pr((H_2,H_3)\land \Pi)=\Pr(\Pi)\Pr(H_2,H_3\mid\Pi)\leq\delta\Pr(\Pi)$ by Lemma~\ref{lem:delta-zeta-truthful} (as mentioned earlier, we can assume no one spends at least $\zeta$ effort), which implies
$\Pr(C,E,F,G)\geq (1-\delta)\Pr(\Pi)$, and which further implies
\begin{align}\label{eq:H}
\Pr(H_2,H_3)\leq \frac{\delta}{1-\delta} \Pr(C,E,F,G).
\end{align}

After replacements according to (\ref{eq:A}) and (\ref{eq:H}), we can rewrite (\ref{eqn:1}) as
\begin{align} \label{lab:key}
& \Pr(C)\left(M^*+\frac{\epsilon\cdot M_c}{1-\epsilon}+\frac{\delta\cdot (d_2-b+\frac{\epsilon\cdot M_c}{1-\epsilon})}{1-\delta}\right)\nonumber\\
\geq & \Pr(B)\left((d_2-M_c-b-M^*)-\frac{\epsilon\cdot M_c}{1-\epsilon}\right)+\Pr(E,G)\left(d_2-M^*-\frac{\epsilon\cdot M_c}{1-\epsilon}-\frac{\delta\cdot (d_2-b+\frac{\epsilon\cdot M_c}{1-\epsilon})}{1-\delta}\right)\nonumber\\
& +\Pr(F)\left(b-\frac{\epsilon\cdot M_c}{1-\epsilon}-\frac{\delta\cdot (d_2-b+\frac{\epsilon\cdot M_c}{1-\epsilon})}{1-\delta}\right).
\end{align}

Since $b>\delta d_2+\frac{\epsilon M_c}{1-\epsilon}$, we have the coefficient of $\Pr(F)$ in (\ref{lab:key}) satisfies
$
b-\frac{\epsilon\cdot M_c}{1-\epsilon}-\frac{\delta\cdot (d_2-b+\frac{\epsilon\cdot M_c}{1-\epsilon})}{1-\delta}=\frac{1}{1-\delta}\left(b-\delta d_2-\frac{\epsilon\cdot M_c}{1-\epsilon}\right)>0.
$
This further implies
\begin{align}\label{eq:b}
d_2=2M^*+2b>2M^*+2\frac{\epsilon\cdot M_c}{1-\epsilon}+2\frac{\delta\cdot (d_2-b+\frac{\epsilon\cdot M_c}{1-\epsilon})}{1-\delta},
\end{align}
so the coefficients of $\Pr(G)$ in (\ref{lab:key}) is positive.
By $b>\delta d_2+\frac{\epsilon M_c}{1-\epsilon}$ again and $d_2=2M^*+2b$,
$(d_2-M_c-b-M^*)-\frac{\epsilon\cdot M_c}{1-\epsilon}=b+M^*-M_c-\frac{\epsilon\cdot M_c}{1-\epsilon}>  \delta d_2+M^*-M_c=\delta d_2+M_{ap}>0,$
so the coefficients of $\Pr(B)$ in (\ref{lab:key}) is also positive.
Therefore, we have
\begin{align}\label{lab:3}
\Pr(C)\left(M^*+\frac{\epsilon M_c}{1-\epsilon}+\frac{\delta (d_2-b+\frac{\epsilon\cdot M_c}{1-\epsilon})}{1-\delta}\right)\geq \Pr(E)\cdot\left(d_2-M^*-\frac{\epsilon M_c}{1-\epsilon}-\frac{\delta (d_2-b+\frac{\epsilon\cdot M_c}{1-\epsilon})}{1-\delta}\right).
\end{align}

Symmetrically, by analyzing Bob, we have
\begin{align}\label{lab:4}
\Pr(E)\left(M^*+\frac{\epsilon M_c}{1-\epsilon}+\frac{\delta (d_2-b+\frac{\epsilon\cdot M_c}{1-\epsilon})}{1-\delta}\right)\geq \Pr(C)\cdot\left(d_2-M^*-\frac{\epsilon M_c}{1-\epsilon}-\frac{\delta (d_2-b+\frac{\epsilon\cdot M_c}{1-\epsilon})}{1-\delta}\right).
\end{align}

Equation (\ref{eq:b}) implies the coefficient of $\Pr(E)$ is strictly greater (less) than the coefficient of $\Pr(C)$ in (\ref{lab:3}) (in (\ref{lab:4})), then $\Pr(C)=\Pr(E)=0$ for otherwise (\ref{lab:3}) and (\ref{lab:4}) cannot be valid at the same time.
When $\Pr(C)=\Pr(E)=0$, (\ref{lab:key}) implies that $\Pr(B)=\Pr(F)=\Pr(G)=0$,
which, by (\ref{eq:A}) and (\ref{eq:H}), further implies that $\Pr(A_2,A_3)=0$ and $\Pr(H_2,H_3)=0$. Combining with a similar analysis for Bob, we will have $\Pr(O)=1$ in every pure or mixed equilibrium.
\end{proof}

\begin{proof}[Proof for Theorem~\ref{thm:main}]
Equipped with Lemma~\ref{lem:focaltruthful}, we only need to show there exists $b$ such that  $\frac{\zeta-2(M_c+M_{ap})}{2}>b>\frac{2\delta (M_c+M_{ap})+\frac{\epsilon M_c}{1-\epsilon}}{1-2\delta}>0$.
The assumption $n>2\log(M_c+M_{ap})+C$ and the facts $\zeta=2^{(n-15)/2},\delta<1,\epsilon<1$ make sure $b$'s existence.
\end{proof}

We remark that as $\delta$ and $\epsilon$ go to 0, the surplus payment over the cost of the computation $b$, can be arbitrarily small.

\section{On Other Notions of Truthfulness}
\label{sect:otherNotions}
The mechanism MRM is \emph{not dominant-strategy truthful} (meaning $\tau$ or any $s\in\tru$ is a dominant strategy), unlike the prisoner's dilemma game.
If fixing Bob's strategy such that Bob only computes $\M$ up to the $i$-th row and reports it as $a$, then $\tau$ (or any other truthful strategies) is not a best respond to Alice, as Alice only needs to compute $\M$ up to the $(i+1)$-th row (maybe also manually inserts a halting state in the $(i+2)$-th row) and wins in $\AP$.

Notice also that $(\tau,\tau)$ is not a \emph{subgame perfect Nash equilibrium} (SPNE).
For our problem, SPNE is difficult to achieve, but it is also unnecessary and unnatural in some sense.
Consider that both agents reach an information set in the arbitration process where they have already been asked the first block of the $i$-th row, $b_{ij}$, in the computation tableau and they have both agreed on an incorrect value of $b_{ij}$.
The mechanism MRM then checks the first block of the $(i+1)$-th row, $b_{i(j+1)}$, to see if both of them agree (if not, the mechanism will compute the Turing machine transition from the $b_{ij}$ to $b_{i(j+1)}$ to verify the correctness).
In this case, both agents should report the value of $b_{i(j+1)}$ computed from $b_{ij}$, instead of the ``correct'' value of $b_{i(j+1)}$ (meaning the value $b_{i(j+1)}$ if the program $\mathcal{M}$ was executed correctly).
An agent who ``keeps lying'' by computing $b_{i(j+1)}$ based on the incorrect $b_{ij}$ will win if the other agent behaves truthfully.

However, the above situation will never happen for rational agents. Our mechanism is ``truthful'' in a natural sense that Nash equilibria are induced at all the information sets along the truthful equilibrium path $(\tau,\tau)$.
The equilibrium $(\tau,\tau)$, even though failing to be a SPNE, does not rely on non-credible threats.

Our notion of strong truthfulness ensures that the mechanism MRM will obtain the correct output of $\mathcal{M}$ if the two agents are rational (by playing any equilibrium strategy).

\bibliographystyle{plainnat}
\bibliography{reference}

\end{document}